\theoremstyle{plain}
\newtheorem{theorem}{Theorem}
\newtheorem{proposition}{Proposition}
\newtheorem{corollary}{Corollary}
\theoremstyle{definition}
\newtheorem{definition}{Definition}
\theoremstyle{remark}
\newtheorem{remark}{Remark}
\newcommand{\eqdef}{\stackrel{\text{def}}{=}}
\newcommand{\word}[1]{\ensuremath{\boldsymbol{#1}}}
\newcommand{\mat}[1]{\boldsymbol{#1}}
\newcommand{\CC}{\mathscr{C}}
\newcommand{\pr}{\mathrm{Pr}}
\newcommand{\WF}{\mathrm{WF}}
\newcommand{\ub}{\mathrm{ub}}
\newcommand{\cv}{\word{c}}
\newcommand{\ev}{\word{e}}
\newcommand{\mv}{\word{m}}
\newcommand{\xv}{\word{x}}
\newcommand{\zv}{\word{z}}
\newcommand{\Gm}{\mat{G}}
\newcommand{\Hm}{\mat{H}}
\newcommand{\Pm}{\mat{P}}
\newcommand{\Sm}{\mat{S}}
\newcommand{\Am}{\mat{A}}
\newcommand{\Bm}{\mat{B}}
\newcommand{\pk}{\mathsf{pk}}
\newcommand{\sk}{\mathsf{sk}}
\newcommand{\ZZ}{\mat{0}}
\newcommand{\GL}{{\normalfont \textsf{GL}}}
\newcommand{\F}{\mathbb{F}}
\newcommand{\Gp}{\mat{G}_{\rm pub}}
\newcommand{\KG}{\mathsf{KeyGen}}
\newcommand{\Enc}{\mathsf{Encrypt}}
\newcommand{\Dec}{\mathsf{Decrypt}}
\newcommand{\MS}[3]{\mathcal{M}_{#1,#2}\left(#3\right)}
\newcommand{\wt}{\mathsf{wt}}
\newcommand{\rank}{\mathsf{rank}}
\newcommand{\dual}[1]{{#1}^\bot}
\newcommand{\HL}{\Hm_{\mathsf{ldpc}}}
\newcommand{\HM}{\Hm_{\mathsf{mdpc}}}
\newcommand{\oL}{\omega_{\mathsf{ldpc}}}
\newcommand{\oM}{\omega_{\mathsf{mdpc}}}
\title{Cryptanalysis of a public key encryption scheme based on
QC-LDPC and QC-MDPC codes}
\begin{document}

%
%

\author{\IEEEauthorblockN{Vlad Dragoi\IEEEauthorrefmark{1}\IEEEauthorrefmark{2} and Herv\'e Tal\'e Kalachi \IEEEauthorrefmark{3}\\
\IEEEauthorblockA{\IEEEauthorrefmark{1} Faculty of Exact Sciences, ``Aurel Vlaicu'' University of Arad, Romania\\
Email: \IEEEauthorrefmark{1}vlad.dragoi@uav.ro}\\
\IEEEauthorblockA{\IEEEauthorrefmark{2} Normandie University, France; UR, LITIS Mont-saint-aignan, France\\
Email: \IEEEauthorrefmark{2} vlad-florin.dragoi@univ-rouen.fr}\\
\IEEEauthorblockA{\IEEEauthorrefmark{3} ERAL, University of Yaounde 1\\
Email: \IEEEauthorrefmark{3} hervekalachi@gmail.com}
}
}

\maketitle

\begin{abstract}
	This letter presents a cryptanalysis of the modified McEliece cryptosystem recently proposed by Moufek, Guenda and Gulliver \cite{GGM16}. The system is based on the juxtaposition of quasi-cyclic LDPC and quasi-cyclic MDPC codes. The idea of our attack is to find an alternative permutation matrix together with an equivalent  LDPC code  which allow the decoding of any cipher-text with a very high probability. We also apply a recent technique to determine weak keys \cite{BDLO16} for this scheme. The results show that the probability of weak keys is high enough that this variant can be ruled out  as a possible secure encryption scheme.

\end{abstract}

\begin{IEEEkeywords}
Post-quantum cryptography; McEliece cryptosystem; QC-LDPC and QC-MDPC codes. 
\end{IEEEkeywords}

%
%
\IEEEpeerreviewmaketitle

\section{Introduction}

 The problem of finding a practical solution for quantum resistant cryptography has become an urgent issue, mainly for two reasons: the existence of a quantum polynomial time algorithm \cite{S94a} that breaks the actual RSA and ECC solutions and the improvements of classical algorithms for the discrete logarithm in small characteristic \cite{BGJT14}. Among the possible candidates for post-quantum cryptography, code-based cryptography is one of the most promising solutions \cite{BBD08}. It is also one of the oldest public key encryption solutions thanks to McEliece's idea \cite{M78}.
 
In the original paper, McEliece proposed to use binary Goppa codes, which remain unbroken, to date. Other families of algebraic codes were proposed (see \cite{N86},\cite{S94}, \cite{JM96} ect). But they were successfully cryptanalyzed, mainly due to their algebraic structure  \cite{SS92,MS07,CMP14}. Probabilistic codes were also considered: concatenated codes were analyzed in \cite{S94,S98}; convolutional codes were proposed in \cite{LJ12}, but successfully cryptanalyzed in \cite{LT13}. Polar codes were also proposed in \cite{SK14} and cryptanalyzed in \cite{BCDOT16}. 

The emergence of all these attacks on several variants of the McEliece cryptosystem shows the importance and necessity of proposing new masking techniques. For example a code with an efficient decoding algorithm could be masked with the use of an arbitrary code in order to sustain the security of the scheme. This technique was proposed for the first time by Wieschebrink in \cite{W06} to avoid the Sidelnikov-Shestakov attack \cite{SS92}. The idea is to use the juxtaposition of a Generalized Reed-Solomon code and a random code.  This solution became famous since it was also used in the case of Reed-Muller codes \cite{GM13}. But these two variants were broken using the square code technique in \cite{CGGOT14,KO15}. Nonetheless the idea was not totally abandoned since Wang \cite{W16} proposed to use the juxtaposition of a GRS code with a random code and then to multiply the generator matrix of the resulting code with a matrix that linearly combines the columns of the GRS and the random code. The main motivation is to obtain a random like code in the end.
Another technique was introduced in \cite{MT16} where the authors propose to use the ``$u | u+v$'' construction with two GRS codes. This new trend also inspired researchers to use the juxtaposition of LDPC and MDPC codes \cite{GGM16}. This is the variant that we analyze in the present article. 
\paragraph{Our contribution}
We describe  a key recovery attack against the modified McEliece cryptosystem based on the juxtaposition of LDPC and MDPC codes \cite{GGM16}. The idea of our attack is to find an alternative permutation matrix together with an equivalent  LDPC code  which allow the decoding of any cipher-text with a very high probability. 
%
%
\section{Background on Coding Theory}
Throughout the paper we denote by $\F_2$ the finite field with $2$ elements  and $\MS{k}{n}{\F_2}$ the set of $k \times n$ matrices with entries in $\F_2$. The \emph{Hamming weight} $\wt (\xv)$ of a vector $\xv \in \F_2^n$ is the number of non-zero coordinates of $\xv.$
A \emph{binary linear code} $\CC$ of length $n$ and dimension $k$ is a $k$-dimensional linear subspace of $\F_2^n.$  A \emph{generator matrix} of $\CC$ is any $k \times n$ matrix $\Gm \in \MS{k}{n}{\F_2}$ with rows that generate $\CC$. 
The dual $\dual{\CC}$ of $\CC$ is the $n-k$-dimensional linear subspace defined by
\[
\dual{\CC} = \Big \{ \zv \in \F_2^n ~:~ \forall \cv \in \CC, ~ \sum_i c_i z_i = 0 \Big \}.
\]
A parity check matrix of $\CC$ is a generator of $\dual{\CC}.$ 
%
%
\begin{definition}\label{def:ldpc_mdpc}
A $(n,k,\omega)$-code is a binary linear code defined by a $k \times n$ parity-check matrix  ($k < n$) where each row has weight $\omega$. 
When $n\to \infty$ we define
\begin{itemize}
\item
 An \emph{LDPC} code is a $(n,k,\omega)$-code with $\omega =  O\left(1\right)$ \cite{G63}.

\item
 An \emph{MDPC} code is a $(n,k,\omega)$-code  with $\omega =  O\left(\sqrt{n}\right)$ \cite{MTSB13}.
\end{itemize}
\end{definition}

Furthermore we will use the $(n,k,\oL)$ notation, respectively $(n,k,\oM).$ The original decoding technique for LDPC codes is the well-known bit flipping algorithm \cite{G63}. This technique is known to provide an error-correction capability which increases linearly with the length of the code, but decreases with the weight of the parity-checks. Therefore MDPC codes suffer from a degradation in decoding performance, compared with LDPC codes. A full description of the bit flipping algorithm can be found in \cite{G63,MTSB13,CS16a}.

\begin{remark}\label{rem:bit_flip_equiv}
Since the performance and the correctness of the bit flipping algorithm depend on the density of the parity-check matrix,  any equivalent parity-check matrix that respects the density condition enables a correct decoding algorithm.  
\end{remark}

\section{Moufek, Guenda and Gulliver's cryptosystem}
In  \cite{GGM16} Moufek, Guenza and Gulliver proposed a McEliece type cryptosystem, based on LDPC and MDPC codes. The scheme is composed of three algorithms: key generation $\KG(\cdot)$, encryption $\Enc(\cdot)$ and decryption $\Dec(\cdot)$. 

\paragraph{$\KG(n,k,t_1,t_2)=(\pk,\sk)$}
\begin{itemize}
\item Pick a generator matrix $\Gm_1$ of a $(n_1,k,\oL)$ LDPC code $\CC_1$ that can correct $t_1$ errors, and a generator matrix $\Gm_2$ of a$(n_2,k,\oM)$ MDPC code denoted $\CC_2$ that can correct $t_2$ errors.  
\item Pick at random  $\Sm$ in $\GL_k(\F_2)$ and an $n\times n$ permutation matrix $\Pm,$ where $n=n_1+n_2.$ 
\item Compute $\Gp \eqdef \Sm \Gm \Pm,$ where $\Gm=\left(\Gm_1\;|\;\Gm_2\right).$
\item Return 
$
\pk=(\Gp,t_1,t_2) \text{~and~} \sk=(\Sm, \Gm_1, \Gm_2, \Pm ).
$ 
\end{itemize}

\paragraph{$\Enc(\mv,\pk)=\zv$}
\begin{itemize}
\item Randomly generate $\ev = \left(\ev_1 \mid \ev_2 \right) \in \F_2^n$ with $\ev_1 \in \F_2^{n_1},$ $\ev_2 \in \F_2^{n_2},$ $\wt (\ev_1) = t_1$ and $\wt (\ev_2) = t_2$. 
\item Compute $\zv = \mv\Gp + \ev$.
\end{itemize}
\paragraph{$\Dec(\zv,\sk)=\mv$}

\begin{itemize}
\item Compute $\zv^{*}=\zv \Pm^{-1}$ and decode using the bit flipping algorithm for $\CC_1$ and $\CC_2.$ The output is $\mv^* \in \F_2^{k}.$ 
\item Return the message $\mv^* \Sm^{-1}.$
\end{itemize}
\begin{remark}
During the decryption, we have $\zv^{*}= \mv \Sm \left(\Gm_1\;|\;\Gm_2\right) +  \left(\ev_1^* \mid \ev_2^* \right)$  with $\left(\ev_1^* \mid \ev_2^* \right)= \ev \Pm^{-1}$ and $\ev_1^* \in \F_2^{n_1}$. The authors of this scheme propose to use the decoding capability of both LDPC and MDPC codes to find $\ev_1^*$ and $\ev_2^*.$ However, we emphasize that one can obtain $\wt (\ev_1^*) > t_1$ or $\wt (\ev_2^*) > t_2.$ This can imply a failure in the decoding algorithm of $\CC_1$ or $\CC_2$. But both situations cannot occur simultaneously since these would then imply that $\wt (\ev \Pm^{-1}) > t.$ It is also important to remark that obtaining $\ev_1^*$ or $\ev_2^*$ is sufficient to recover $\mv \Sm$ and thus $\mv.$ 
\end{remark}   
 
\section{Cryptanalysis of the Moufek, Guenda and Gulliver's scheme. }
  
  We propose here a key recovery attack against the previous cryptosystem. 
  One of the key points in our attacks is the following proposition. 
  
\begin{proposition}\label{prop:juxtaposition}
A parity check matrix of the public code is 
$$\Hm^\prime= \Hm\Pm = \begin{pmatrix} \HL & \bm{0}\\\bm{0}&\HM \\ \Am&\Bm \end{pmatrix} \Pm,$$
  where $\HL \in \MS{n_1-k}{n_1}{\F_2}$ is the low weight parity-check matrix of the LDPC code, $\HM \in \MS{n_2-k}{n_2}{\F_2}$ the low weight parity-check matrix of the MDPC code,  $\left( \Am \mid \Bm \right) \in \MS{k}{n_1 + n_2}{\F_2}$ a full-rank matrix such that $\Gm_1\Am^t+\Gm_2\Bm^t=\ZZ$ and $\Am,\Bm\ne\ZZ.$ 
\end{proposition}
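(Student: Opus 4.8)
\emph{Proof plan.} Write $\Cp = \langle\Gp\rangle$ for the public code. Since $\Gm_1$ generates a code of dimension $k$ we have $\rank\Gm_1 = k$, hence $\rank\Gm = k$; and as $\Sm\in\GL_k(\F_2)$ and $\Pm$ is a permutation, $\dim\Cp = k$ and $\dim\dual{\Cp} = n-k$. It therefore suffices to check two things about the $(n-k)\times n$ matrix $\Hm' = \Hm\Pm$: that every row is orthogonal to $\Cp$, i.e. $\Gp(\Hm')^t = \ZZ$, and that $\rank\Hm' = n-k$. Together these force the code $\{\xv : \xv(\Hm')^t = \ZZ\}$, which has dimension $k$ and contains $\Cp$, to equal $\Cp$, so $\Hm'$ is a parity-check matrix of the public code.

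For the orthogonality I would compute directly. Using $(\Hm\Pm)^t = \Pm^t\Hm^t$ and $\Pm\Pm^t = \mathbf I$ (a permutation matrix is orthogonal),
\[
\Gp(\Hm')^t = \Sm\Gm\Pm\,\Pm^t\Hm^t = \Sm\,\Gm\Hm^t .
\]
Transposing the block form of $\Hm$ and multiplying $\Gm = (\Gm_1\mid\Gm_2)$ block-wise gives
\[
\Gm\Hm^t = \bigl(\,\Gm_1\HL^t \ \bigm|\ \Gm_2\HM^t \ \bigm|\ \Gm_1\Am^t + \Gm_2\Bm^t\,\bigr).
\]
The first block is $\ZZ$ because $\HL$ is a parity-check matrix of $\CC_1 = \langle\Gm_1\rangle$, so $\Gm_1\HL^t = \ZZ$; similarly $\Gm_2\HM^t = \ZZ$; and the third block is $\ZZ$ by the hypothesis $\Gm_1\Am^t+\Gm_2\Bm^t = \ZZ$. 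Hence $\Gp(\Hm')^t = \ZZ$.

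For the rank, $\Pm$ invertible gives $\rank\Hm' = \rank\Hm$, and $\Hm$ has exactly $(n_1-k)+(n_2-k)+k = n-k$ rows. The rows of $(\HL\mid\ZZ)$ and $(\ZZ\mid\HM)$ span $\dual{\CC_1}$ placed in the first $n_1$ coordinates and $\dual{\CC_2}$ placed in the last $n_2$ coordinates; these intersect only in $\ZZ$ and contribute $n-2k$ independent rows. The remaining---and main---difficulty is to show that the $k$ rows of $(\Am\mid\Bm)$ are independent \emph{modulo} $\dual{\CC_1}\times\dual{\CC_2}$: the stated full rank of $(\Am\mid\Bm)$ gives independence in $\F_2^n$ but not relative to the dual blocks, so one must exclude any nonzero combination of its rows from lying in $\dual{\CC_1}\times\dual{\CC_2}$. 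The clean way to guarantee this is to note that $\dual{\CC_1}\times\dual{\CC_2}\subseteq\dual{\CC}$, with $\CC = \langle\Gm\rangle$, is a subspace of codimension $k$, and to take the rows of $(\Am\mid\Bm)$ as a basis of a complement of it inside $\dual{\CC}$. Such a choice is full rank, satisfies $\Gm_1\Am^t+\Gm_2\Bm^t = \ZZ$ because its rows lie in $\dual{\CC}$, and generically has $\Am,\Bm\neq\ZZ$; then all three blocks together span the $(n-k)$-dimensional dual, so $\rank\Hm = n-k$ and the argument concludes.
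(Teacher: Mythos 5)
Your proof is correct, and it is substantially more complete than the paper's own argument: the paper disposes of this proposition in one line, declaring it ``obvious since $\Hm^{\prime}\Gp^t = \ZZ$'' --- that is, it only records the orthogonality computation you carry out block-wise, and says nothing about rank. Strictly speaking, orthogonality alone only shows that the rows of $\Hm^\prime$ lie in $\dual{\Cp}$, not that they generate it, so your second half --- counting $(n_1-k)+(n_2-k)+k = n-k$ rows, observing that full rank of $\left( \Am \mid \Bm \right)$ in $\F_2^n$ does not by itself give independence modulo $\dual{\CC_1}\times\dual{\CC_2}$, and fixing this by taking the rows of $\left( \Am \mid \Bm \right)$ to be a basis of a complement of $\dual{\CC_1}\times\dual{\CC_2}$ inside $\dual{\CC}$ --- is exactly the content the paper leaves implicit, and it is what justifies calling $\Hm^\prime$ a parity-check matrix rather than merely a matrix of dual codewords. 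One small improvement: the conditions $\Am\ne\ZZ$ and $\Bm\ne\ZZ$ are not just generic under your choice but automatic, since if $\Am=\ZZ$ then every row $(\word{0}\mid \word{b}_i)$ satisfies $\Gm_2\word{b}_i^t=\word{0}$, hence lies in $\dual{\CC_1}\times\dual{\CC_2}$, contradicting that the rows span a complement of that subspace (and symmetrically for $\Bm$). With that observation your argument is airtight, whereas the paper's one-line proof buys brevity at the cost of omitting the dimension argument entirely.
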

\begin{proof}
The proof is obvious since we have $\Hm^{\prime}\Gp^t = \ZZ.$
\end{proof} 
This proposition shows the existence of a sufficient number of codewords with weights $\oL$ in the dual of the public code. We represent this set of codewords by the matrix $\Hm^* \in \MS{k^*}{n_1 + n_2}{\F_2},$ each row of $\Hm^*$ being an element of the set and conversely. Since these codewords of weights $\oL$ contain the rows of $\begin{pmatrix} \HL & \bm{0}\end{pmatrix} \Pm,$ we have $\rank{\left( \Hm^* \right)} \geq n_1-k.$ One can expect to have $\rank{\left( \Hm^* \right)} > n_1-k$ but our practical experiments always gave an equality, that is to say $\rank{\left( \Hm^* \right)} = n_1-k$. In the sequel we can suppose that $k^*=n_1-k.$  If $k^* > n_1-k,$ one can select only $n_1-k$ rows of $\Hm^*$ that are linearly independent. 

The attack starts with a search for codewords of weights $\oL$. 
  
  \paragraph{\bf Search of codewords of weights $\oL$ } 
 This step aims to find the rows of the matrix $\Hm^* \in \MS{k^*}{n_1 + n_2}{\F_2}.$ Assuming without loss of generality that an adversary knows the value of the parameter $\oL,$ this step can be achieved by applying any of the ISD variants such as, for example, Dumer's algorithm \cite{D91}. This issue is discussed in detail in Section \ref{sec:complexity}, where a complexity analysis of our attack is given. We emphasize that $\oL$ can be easily guessed during this step by starting with $\oL = 1$ and increasing up to the value that satisfies the condition $\rank{\left( \Hm^* \right)} = n_1 - k.$ 
\begin{remark}\label{exist_of_P*}
At the end of this step we have managed to build the matrix $\Hm^* \in \MS{k^*}{n_1 + n_2}{\F_2}$ that generates the same code as $\left( \HL \mid \ZZ \right) \Pm$ and therefore has $n_2$ zero columns.  
\end{remark}
 
\begin{proposition}
There exists a permutation matrix $\Pm^{*}$ such that $\Hm^* \Pm^* = \left( \Hm_{1}^* \mid \ZZ \right) $ with $\Hm_1^* \in \MS{k^*}{n_1}{\F_2}.$ 
Such a matrix can be computed with complexity $O(n)$ and satisfies
\begin{itemize}
\item $\left(\HL \mid \ZZ \right) \Pm \Pm^* = \left( \HL \Pm_1 \mid \ZZ \right),$  
\item $\Pm \Pm^* = \begin{pmatrix}
\Pm_1 & \ZZ \\
\ZZ & \Pm_2
\end{pmatrix}.$ 
\end{itemize}
$\Pm_1$ and $\Pm_2$ being permutation matrices of sizes $n_1$ and $n_2$ respectively.
\end{proposition}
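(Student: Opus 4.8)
The plan is to first pin down the positions of the zero columns, then construct $\Pm^*$ so as to gather them on the right, and finally exploit the fact that $\left( \HL \mid \ZZ \right)$ already carries its zero columns on the right in order to read off the block-diagonal shape of $\Pm\Pm^*$.

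First I would use that $\Hm^*$ and $\left( \HL \mid \ZZ \right)\Pm$ generate the same code. Since a coordinate position is identically zero on the whole code if and only if the corresponding column vanishes in any generator matrix, the two matrices share the same set $J$ of zero columns, and by Remark~\ref{exist_of_P*} we have $|J|=n_2$. Writing $I=\{1,\dots,n\}\setminus J$ for the complementary set of $n_1$ nonzero columns, I would take $\Pm^*$ to be any permutation matrix sending the positions of $I$ to $\{1,\dots,n_1\}$ and those of $J$ to $\{n_1+1,\dots,n\}$; such a permutation exists because $|I|=n_1$ and $|J|=n_2$. By construction the $n_2$ zero columns of $\Hm^*$ are moved to the last $n_2$ coordinates, so $\Hm^*\Pm^*=\left(\Hm_1^*\mid\ZZ\right)$ with $\Hm_1^*\in\MS{k^*}{n_1}{\F_2}$. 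Reading off the zero/nonzero pattern is a single pass over the $n$ columns and assembling the permutation from $I$ and $J$ is linear, which gives the announced $O(n)$ cost.

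The heart of the argument is to show that this \emph{same} $\Pm^*$ block-diagonalizes $\Pm\Pm^*$. Here I would use that $\left(\HL\mid\ZZ\right)$ has its zero columns exactly at positions $\{n_1+1,\dots,n\}$ and its nonzero columns exactly at $\{1,\dots,n_1\}$; this is precisely the content of the count $|J|=n_2$, since if $\HL$ possessed a zero column then $\left(\HL\mid\ZZ\right)\Pm$ would carry strictly more than $n_2$ zero columns, contradicting the Remark. Viewing each permutation matrix as a permutation of column indices, $\Pm$ therefore maps $\{1,\dots,n_1\}$ onto $I$ and $\{n_1+1,\dots,n\}$ onto $J$, while $\Pm^*$ maps $I$ back onto $\{1,\dots,n_1\}$ and $J$ back onto $\{n_1+1,\dots,n\}$. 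Consequently the composite $\Pm\Pm^*$ stabilizes each of the two index blocks, which is exactly the statement that it has the form $\Pm\Pm^*=\begin{pmatrix}\Pm_1 & \ZZ\\ \ZZ & \Pm_2\end{pmatrix}$ with $\Pm_1$ and $\Pm_2$ permutation matrices of sizes $n_1$ and $n_2$.

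Finally I would expand the product to obtain the first bullet:
\[
\left(\HL\mid\ZZ\right)\Pm\Pm^* = \left(\HL\mid\ZZ\right)\begin{pmatrix}\Pm_1 & \ZZ\\ \ZZ & \Pm_2\end{pmatrix} = \left(\HL\Pm_1 \mid \ZZ\right),
\]
the block $\ZZ\Pm_2=\ZZ$ killing the right half. I expect the main obstacle to be the block-diagonality claim of the previous paragraph: everything hinges on $\left(\HL\mid\ZZ\right)$ having no spurious zero column inside its first $n_1$ coordinates, so that the zero-column sets of $\Hm^*$ and of $\left(\HL\mid\ZZ\right)\Pm$ are forced to coincide position-by-position with those arising from the trailing $\ZZ$ block. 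This is the genericity assumption implicitly secured by the exact count $|J|=n_2$ in Remark~\ref{exist_of_P*}; once it is granted, $\Pm$ and $\Pm^*$ act as mutual inverses on the relevant blocks and the decomposition follows.
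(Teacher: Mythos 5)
Your proof is correct and takes essentially the same approach as the paper's: both rest on Remark~\ref{exist_of_P*} (matrices generating the same code have the same zero columns), construct $\Pm^*$ by gathering the $n_2$ zero columns of $\Hm^*$ on the right in linear time, and track where the two index blocks are sent in order to get the block structure of $\Pm\Pm^*$. The only difference is organizational and in level of detail: the paper asserts the first bullet directly from the Remark and then deduces block-diagonality from it, whereas you establish block-diagonality first via the index-set argument and obtain the first bullet as a corollary, along the way making explicit the assumption (left implicit in the paper) that $\HL$ itself has no zero column.
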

\begin{proof}
The existence of $\Pm^{*}$ comes directly from Remark \ref{exist_of_P*} which also provides $\left(\HL \mid \ZZ \right) \Pm \Pm^* = \left( \HL \Pm_1 \mid \ZZ \right).$ Furthermore, this last equality with the fact that $\Pm_1$ and $\Pm \Pm^*$ are permutation matrices imply that $\Pm \Pm^* = \begin{pmatrix}
\Pm_1 & \ZZ \\
\ZZ & \Pm_2
\end{pmatrix}$ where $\Pm_2$ is also a permutation matrix. To finish, given the matrix $\Hm^*,$ a matrix $\Pm^{*}$ can be easily computed by identifying the $n_2$ zero columns of $\Hm^*.$      
\end{proof}  
This proposition shows that a cryptanalysis is able to find an alternative permutation matrix $\Pm^*$ together with a parity check matrix $\Hm_{1}^*$ of an equivalent LDPC code that can correct the same number of errors as the secret one. In the sequel, we are going to show that the pair $\left( \Pm^{*} , \Hm_{1}^* \right) $ is sufficient to decode any cipher-text with a high probability.

  \paragraph{\bf Decryption with $\left( \Pm^{*} , \Hm_{1}^* \right)$}
We show here the way to decrypt any cipher-text with $\Pm^*$ and $\Hm^*.$  Let $\ev^* = \ev \Pm^* = \left( \ev^*_1 \mid \ev^*_2 \right)$ and $\zv^*  = \zv \Pm^* = \left(\zv^*_1 \mid \zv^*_2 \right) $ with $\ev^*_1 , \zv^*_1 \in \F_2^{n_1}.$ So
\begin{align*}\zv^* & = \left( \mv\Gp + \ev \right) \Pm^* =  \mv \Sm \left( \Gm_1 \Pm_1  \mid \Gm_2 \Pm_2 \right) +\left( \ev^*_1 \mid \ev^*_2 \right).
\end{align*} 
This implies that $\zv^*_1 = \mv \Sm  \Gm_1 \Pm_1 + \ev^*_1.$ Since $\Sm  \Gm_1 \Pm_1$ generates an LDPC code with parity-check matrix $\Hm_{1}^*,$ we can recover $\ev^*_1$ using the bit-flipping algorithm, assuming that $\wt (\ev^*_1) \leq t_1.$        

In the next paragraph we prove that the probability that $\wt(\ev_1^*) \le t_1$ is asymptotically close to 1 when the length of the codes goes to infinity. 
  
   \begin{theorem}\label{thm:proba_t1}
 For $i\in\{1,2\}$ let $n_i,t_i$ be integers such that $t_i<n_i$ with $n_1=\gamma n_2$ and $t_1>\gamma t_2,$ where $\gamma\ge 1.$ Let $\xv=(\xv_1 \mid \xv_2)$ be a random vector over $\F_2^{n_1+n_2}$ with $\wt(\xv)=t_1+t_2,$ where $\xv_i\in \F_2^{n_i}$ for $i \in \{1,2\}.$ Then we have
\[\pr(\wt(\xv_1)> t_1)< t_2 \dfrac{\binom{n_1}{t_1+1}\binom{n_2}{t_2-1}}{\binom{n_1+n_2}{t_1+t_2}}.\]
  \end{theorem}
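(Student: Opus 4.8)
The plan is to recognise the distribution of $\wt(\xv_1)$ explicitly and then reduce the tail bound to a purely combinatorial estimate. Writing $W \eqdef \wt(\xv_1)$, a uniformly random weight-$(t_1+t_2)$ vector distributes its $t_1+t_2$ ones among the $n_1+n_2$ coordinates, so $W$ is hypergeometric: $\pr(W=j) = \binom{n_1}{j}\binom{n_2}{t_1+t_2-j}/\binom{n_1+n_2}{t_1+t_2}$. Hence $\pr(W>t_1)=\binom{n_1+n_2}{t_1+t_2}^{-1}\sum_{j=t_1+1}^{t_1+t_2} a_j$ with $a_j \eqdef \binom{n_1}{j}\binom{n_2}{t_1+t_2-j}$. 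The denominator is exactly the one appearing in the claimed bound, the sum runs over precisely $t_2$ indices, and its first term is $a_{t_1+1}=\binom{n_1}{t_1+1}\binom{n_2}{t_2-1}$, which is the product in the bound. So the whole statement reduces to showing $\sum_{j=t_1+1}^{t_1+t_2} a_j < t_2\, a_{t_1+1}$.

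First I would argue that $a_{t_1+1}$ dominates every other term, i.e. that $(a_j)$ is decreasing on the index range. Using $\binom{n}{k+1}/\binom{n}{k}=(n-k)/(k+1)$ gives $a_{j+1}/a_j = \frac{(n_1-j)(t_1+t_2-j)}{(j+1)(n_2-t_1-t_2+j+1)}$. As $j$ increases the numerator decreases and the denominator increases, so this ratio is itself decreasing in $j$; therefore it suffices to check that the very first ratio, at $j=t_1+1$, is already below $1$, which is the inequality $(n_1-t_1-1)(t_2-1) < (t_1+2)(n_2-t_2+2)$.

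The hard part is precisely this last inequality, and it is where both hypotheses $n_1=\gamma n_2$ and $t_1>\gamma t_2$ (with $\gamma\ge1$) are needed; note that coupling the two binomial factors is unavoidable, since for small $t_1$ the factor $\binom{n_1}{j}$ is increasing and cannot be controlled on its own. I would prove it by the chain $n_1-t_1-1 < \gamma n_2-\gamma t_2 = \gamma(n_2-t_2)$ together with $t_2-1<t_2$, $t_1+2>\gamma t_2$ and $n_2-t_2+2>n_2-t_2>0$ (the last using $t_2<n_2$), which combine to place the left-hand side strictly below $\gamma t_2(n_2-t_2)$ and the right-hand side strictly above it.

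Finally, once the ratio at $j=t_1+1$ is strictly below $1$ the monotonicity yields $a_j<a_{t_1+1}$ for every $j>t_1+1$; summing the $t_2$ terms (one equal to $a_{t_1+1}$ and $t_2-1$ strictly smaller) gives $\sum_j a_j < t_2\, a_{t_1+1}$, and dividing by $\binom{n_1+n_2}{t_1+t_2}$ concludes. The only boundary case to flag is $t_2=1$, where the sum has a single term and the inequality degenerates to an equality, so the strict bound implicitly assumes $t_2\ge2$; I would also tacitly take $t_1+t_2\le n_1$ so that all $t_2$ terms are genuinely present, the contrary case only deleting terms and making the bound easier.
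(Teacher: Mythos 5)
Your proposal is correct and follows essentially the same route as the paper: identify the hypergeometric distribution of $\wt(\xv_1)$, use the ratio of consecutive terms together with the hypotheses $n_1=\gamma n_2$ and $t_1>\gamma t_2$ to show the terms decrease strictly beyond index $t_1+1$, and bound the $t_2$-term tail by $t_2$ times its leading term. Your remark that the inequality degenerates to an equality when $t_2=1$ is a fair observation about a boundary case the paper's proof passes over silently, but it does not change the substance of the argument.
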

  
  \begin{proof}First notice that we have
  \begin{equation}
  \forall \;0\le i\le t_1+t_2 \;,\;\pr(\wt(\xv_1)= i)=\dfrac{\binom{n_1}{i}\binom{n_2}{t_1+t_2-i}}{\binom{n_1+n_2}{t_1+t_2}}.
  \label{eq:proba}\end{equation}
  
Using the latter probability we obtain by a simple computation $$\dfrac{\pr(\wt(\xv_1)=i)}{\pr(\wt(\xv_1)=i+1)}=\dfrac{(i+1)(n_2-t_1-t_2+i+1)}{(n_1-i)(t_1+t_2-i)}.$$

Hence for any $i\ge t_1+1$ we obtain  $\dfrac{\pr(\wt(\xv_1)=i)}{\pr(\wt(\xv_1)=i+1)}\ge \dfrac{(t_1+2)(n_2-t_2+2)}{(n_1-t_1)t_2}.$ Replacing $n_1=\gamma n_2$ and $t_1>\gamma t_2$ in the latter fraction we obtain that   \begin{equation}\forall \; i\ge t_1+1\;\text{ we have }\;{\pr(\wt(\xv_1)=i)}>{\pr(\wt(\xv_1)=i+1)}\label{eq:ineg_prob}\end{equation}

From \eqref{eq:ineg_prob} and \eqref{eq:proba} we deduce the desired result.
  \end{proof}

 When $t_1$ and $t_2$ are linear in the code length we obtain the following asymptotic approximation
  \begin{corollary}\label{cor:asympt_proba}
  Let $n_2=n,$ $n_1=\gamma n$ and  $t_1=\alpha n_1,$  $t_2=\beta n_2$ with $\beta<\alpha\gamma\le 1/2.$ Then when $n\to\infty$ we have
  \[\pr(\wt(\xv_1)> t_1)< c_{\alpha,\beta,\gamma}\sqrt{n} 2^{-n\left((\gamma+1)h(\alpha+\frac{\beta-\alpha}{\gamma+1})-\gamma h(\alpha)-h(\beta)\right)},\]
  where $c_{\alpha,\beta,\gamma}$ is a constant and $h$ is the binary entropy function.
  \end{corollary}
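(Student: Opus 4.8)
The plan is to compute the asymptotic equivalent of the explicit upper bound furnished by Theorem~\ref{thm:proba_t1} by feeding each binomial coefficient through Stirling's approximation, keeping track separately of the exponential order and of the polynomial prefactor. Recall that for $k/N$ fixed and bounded away from $0$ and $1$ one has $\binom{N}{k}=\big(2\pi N\tfrac{k}{N}(1-\tfrac{k}{N})\big)^{-1/2}\,2^{N h(k/N)}(1+o(1))$. Introducing $\delta\eqdef\tfrac{\alpha\gamma+\beta}{\gamma+1}$, the substitution $n_2=n$, $n_1=\gamma n$, $t_1=\alpha\gamma n$, $t_2=\beta n$ makes the three relevant arguments tend to the fixed constants $\alpha$, $\beta$ and $\delta$; moreover $\beta<\alpha\gamma\le 1/2$ forces $\alpha\gamma+\beta<1$, so $\delta<\tfrac{1}{\gamma+1}\le\tfrac12$, and all three values lie strictly inside $(0,1)$. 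Hence the uniform Stirling form applies to each binomial in the bound.

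First I would treat the exponential part. The arguments of $h$ are $(t_1+1)/n_1\to\alpha$, $(t_2-1)/n_2\to\beta$, and $(t_1+t_2)/(n_1+n_2)=\delta$ exactly, and a one-line computation confirms $\delta=\alpha+\tfrac{\beta-\alpha}{\gamma+1}$, so the leading terms combine to $2^{\gamma n h(\alpha)+n h(\beta)-(\gamma+1)n h(\delta)}=2^{-n\left((\gamma+1)h(\delta)-\gamma h(\alpha)-h(\beta)\right)}$, which is exactly the claimed exponent. The only delicate point is the $\pm1$ shifts: a first-order expansion gives $\gamma n\,h(\alpha+\tfrac{1}{\gamma n})=\gamma n\,h(\alpha)+h'(\alpha)+o(1)$ and $n\,h(\beta-\tfrac{1}{n})=n\,h(\beta)-h'(\beta)+o(1)$, so these perturbations contribute only the bounded multiplicative factor $2^{h'(\alpha)-h'(\beta)}$, which is absorbed into $c_{\alpha,\beta,\gamma}$ and leaves the $O(n)$ coefficient of the exponent untouched.

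It then remains to collect the polynomial prefactors together with the leading factor $t_2=\beta n$. Each binomial yields a prefactor of order $n^{-1/2}$, and in the ratio two sit in the numerator against one in the denominator, producing a net $n^{-1/2}$ times the explicit constant $\big((\gamma+1)\delta(1-\delta)/(2\pi\gamma\alpha(1-\alpha)\beta(1-\beta))\big)^{1/2}$; multiplying by $t_2=\beta n$ converts this $n^{-1/2}$ into the announced $\sqrt n$, and the surviving constant times $2^{h'(\alpha)-h'(\beta)}$ is precisely $c_{\alpha,\beta,\gamma}$. The main obstacle is purely bookkeeping rather than conceptual: one must check that the $\pm1$ shifts genuinely leave the $O(n)$ part of the exponent unchanged, and that the three $n^{-1/2}$ prefactors combine with the $t_2$ factor to give $\sqrt n$ and not some other power of $n$. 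Finally, I would remark that concavity of $h$, applied to the convex combination $\delta=\tfrac{\gamma}{\gamma+1}\alpha+\tfrac{1}{\gamma+1}\beta$, gives $(\gamma+1)h(\delta)\ge\gamma h(\alpha)+h(\beta)$ with strict inequality whenever $\alpha\ne\beta$, so the exponent is nonnegative and the bound genuinely decays in the cryptographically relevant range $\beta<\alpha$.
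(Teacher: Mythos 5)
Your proposal is correct and follows exactly the route the paper intends: the paper's own proof is the one-line instruction ``apply the Stirling approximation for factorials and expand the series,'' and your argument is precisely that computation carried out in full, with the exponent identification $\delta=\alpha+\frac{\beta-\alpha}{\gamma+1}$, the absorption of the $\pm 1$ shifts into the constant, and the bookkeeping showing that the three $n^{-1/2}$ prefactors combined with the factor $t_2=\beta n$ yield the claimed $\sqrt{n}$. Your closing concavity remark (that the exponent is positive whenever $\alpha\neq\beta$) is a useful addition not present in the paper, but the core approach is the same.
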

  \begin{proof}Apply the Stirling approximation for factorials and expand the series to obtain the result.
  \end{proof}
  \begin{table}[!ht]
\caption{ \tiny{The probability $\pr(\wt(\xv_1)> t_1)$ for $t_1=n_1/20$ and $t_2=n_2/40$ with $n_1=4k$ and $n_2=2k.$ The first row is the exact value for the probability and the second row is the upper bound from Theorem \ref{thm:proba_t1}, namely $\ub_{k,t_1,t_2}=t_2 {\binom{n_1}{t_1+1}\binom{n_2}{t_2-1}}/{\binom{n_1+n_2}{t_1+t_2}}$. The third row is the asymptotic value of the upper bound from Corollary \ref{cor:asympt_proba}, i.e. $\ub_{k,\alpha,\beta,\gamma}=\sqrt{n} 2^{-n\left((\gamma+1)h(\alpha+\frac{\beta-\alpha}{\gamma+1})-\gamma h(\alpha)-h(\beta)\right)}$.} }
\begin{center}
\resizebox{0.8\columnwidth}{!}{  
\begin{tabular}{|c||ccc |}
\hline
k&6851&8261&9857\\
\hline
\hline
$\log_2\left(\pr(\wt(\xv_1)> t_1)\right)$&-117&-141&-166\\
\hline
$\log_2\left(\ub_{k,t_1,t_2}\right)$ Thm\ref{thm:proba_t1}&-110&-133&-159\\
  \hline
$\log_2\left(\ub_{k,\alpha,\beta,\gamma}\right)$ Cor\ref{cor:asympt_proba}  $$&-106&-129&-155\\
  \hline
\end{tabular}
}
\end{center}
\label{fig:0}
\end{table}  
  \section{Complexity analysis and numerical results.}\label{sec:complexity}
\paragraph{Complexity analysis}
The work factor of computing an alternative private key $(\Pm^*,\Hm^*)$ is given by the work factor of the low weight codewords search algorithm plus the computation required to find $\Pm^*.$ The first step can be done using any of the ISD variants such as 
   Dummer (D-ISD,\cite{D91}),  May, Meurer and Thomae (MMT-ISD,\cite{MMT11}) or  Becker, Joux, May and Meurer (BJMM-ISD,\cite{BJMM12}) or May and Ozerov (MO-ISD \cite{MO15}). These algorithms have a time complexity equal to $O(e^{-\omega \ln(1-k/n)(1+o(1))}),$ as long as $\omega=o(n)$ when $n\to \infty$ \cite{CS16}. Hence, in our case computing the matrix $\Hm^*$ requires a work factor asymptotically equal to $O(e^{-\oL \ln(k/(n_1+n_2))(1+o(1))}),$ when $n_1+n_2$ tends to infinity.
Regarding the complexity of computing $\Pm^*,$ it requires $n_1+n_2$ basic operations (here we consider binary additions of  length $k$ binary vectors). Thus the complexity of our attack, denoted by $\WF_{\mathcal{A}}(n_1,n_2,k,\oL)$ is in the worst case dominated by the cost of the best ISD variant. 
\paragraph{Numerical results}
We analyze the effective cost of our attack on some practical parameters. Firstly we considered suggested values given in \cite{GGM16}, more precisely $n_1+n_2=16128.$ As for the co-dimension we analyzed three different cases, $n_1+n_2-k\in \{8064,10080,12096\}.$ Hence the complexity of finding the codewords of weight $15$ in the dual of the public code ( i.e. $\oL=15$), using the BJMM-ISD variant equals $2^{6.13},2^{10.97}$ respectively $2^{19,20}.$ The computations were done using a PariGP implementation similar to that in \cite{P10}. The probability $\pr(\wt(\xv_1)> t_1)$ (for these parameters) is given in Table \ref{fig:0}. These results show that the parameters proposed in \cite{GGM16} are too vulnerable to be considered in practice.  

One might generate a more resistant set of parameters for the scheme, which fact we illustrate in Table \ref{fig:2}. However, it is important to consider the weak keys approach,  a recent technique introduced in \cite{BDLO16}, where the authors use the Extended Euclidean algorithm in order to recover a private key given a public key. We compute the probability of weak keys for the Moufek et al. variant with reasonable parameters in Table \ref{fig:2}.  
\begin{table}[!ht]
\caption{ \tiny{The proportion of weak keys and the complexity of our attack against the Moufek et al. McEliece variant using a  $(4k,k,\oL)$ LDPC code and a $(2k,k,\oM)$ MDPC code.} 
}
\resizebox{\columnwidth}{!}{ 
\begin{tabular}{|c||c|c|c|}
\hline
$(k,\oL)$&$(6851,36)$&$(8261,44)$&$(9857,54)$\\
\hline\hline
$\log\left(\WF_{\mathcal{A}}(4k,2k,k,\oL)\right)$&$80.9$&$101.6$&$127.4$\\
\hline
Proportion of weak keys&$2^{-7.3}$&$2^{-10.6}$&$2^{-14}$\\
\hline 
\end{tabular}
}
\label{fig:2}
\end{table} 

\begin{remark} 
In the first place we remark that the key size for the \cite{GGM16} scheme is considerably greater than for similar schemes such as \cite{B14,MTSB13}.

Notice from Table \ref{fig:2} that the odds of generating weak keys are too big to imagine that the scheme can be protected against this type of attack. Indeed, we find that the scheme might be secured against the weak keys approach by increasing the values of the $\oL.$ But in order to obtain reasonable secure parameters this solution is equivalent to replacing the LDPC code with an MDPC code and thus is of no interest compared with the MDPC McEliece variant \cite{MTSB13}.  
\end{remark}
\begin{remark}
It is also worth mentioning the recent reaction attacks against the QC-MDPC scheme \cite{GJS16} and the QC-LDPC scheme  \cite{FHSZGJ17}. This technique can be used to recover the structure of the MDPC code in the case of Guenda's et al. variant. Nonetheless we remark from Theorem \ref{thm:proba_t1} and Corollary \ref{cor:asympt_proba} that the weight of $\xv_2$ is likely to be much bigger than the error capacity of the MDPC code. Hence, one might not be able to retrieve the initial message unless it uses the LDPC code. 

However, the reaction attack remains highly interesting in similar constructions, namely in the case of the direct sum or Plotkin sum of LDPC and MDPC codes. Indeed, in these cases attacking the LDPC code with our technique is not sufficient for retrieving the initial message, and thus revealing the structure of the MDPC code is necessary. 
\end{remark}
\section{Conclusion}
We have proposed a successful cryptanalysis of the McEliece variant in \cite{GGM16}. Our attack exploits the structure of the dual of the public code and its complexity is dominated by the low weight search algorithm on this dual. The attack is entirely based on finding the structure of the LDPC code, regardless of the nature of the second code. As a consequence, our result can be applied even if the MDPC code is replaced by another code.

We notice that this variant is also vulnerable to the weak keys approach \cite{BDLO16}, since the proportion of weak keys is not negligible. Hence, one can consider that the McEliece variant \cite{GGM16} is too vulnerable to be practical. We also emphasize that similar constructions, such as the direct sum or Plotkin sum of MDPC and LDPC codes can be attacked by combining our technique with the latest reaction attacks. 
%

\bibliographystyle{plain}

\end{document}